\numberwithin{equation}{section}
\newtheorem{theorem}{Theorem}[section]
\newtheorem{algorithm}[theorem]{Algorithm}
\newdefinition{remark}[theorem]{Remark}
\newdefinition{example}[theorem]{Example}
\newdefinition{problem}[theorem]{Problem}
\newproof{proof}{Proof}
\def\ps@pprintTitle{%
  \let\@oddhead\@empty
  \let\@evenhead\@empty
  \let\@oddfoot\@empty
  \let\@evenfoot\@oddfoot
}
\newcommand{\ra}[1]{\renewcommand{\arraystretch}{#1}}
\begin{document}

\thispagestyle{empty}

\begin{frontmatter}

\title{$G^{k,l}$-constrained multi-degree reduction of B\'{e}zier curves}

\author{Przemys{\l}aw Gospodarczyk\corref{cor}}
\ead{pgo@ii.uni.wroc.pl}
\author{Stanis{\l}aw Lewanowicz}
\ead{Stanislaw.Lewanowicz@ii.uni.wroc.pl}
\author{Pawe{\l} Wo\'{z}ny}
\ead{Pawel.Wozny@ii.uni.wroc.pl}
\cortext[cor]{Corresponding author. Fax {+}48 71 3757801}
\address{Institute of Computer Science, University of Wroc{\l}aw,
         ul.~Joliot-Curie 15, 50-383 Wroc{\l}aw, Poland}

\begin{abstract}
We present a new approach to the problem of $G^{k,l}$-constrained ($k,l \leq 3$) multi-degree reduction of B\'{e}zier curves with respect to the least squares norm. First, to minimize the least squares error, we consider two methods of determining the values of geometric continuity parameters.
One of them is based on quadratic and nonlinear programming, while the other uses some simplifying assumptions and solves a system of linear equations.
Next, for prescribed values of these parameters, we obtain control points of the multi-degree reduced curve, using the properties of
constrained dual Bernstein basis polynomials. Assuming that the input and output curves are of degree $n$ and $m$, respectively,
we determine these points with the complexity $O(mn)$, which is significantly less than the cost of other known methods.
Finally, we give several examples to demonstrate the effectiveness of our algorithms.
\end{abstract}

\begin{keyword}
Constrained dual Bernstein basis, B\'{e}zier curves, Multi-degree reduction, Geometric continuity, Quadratic programming, Nonlinear programming.
\end{keyword}

\end{frontmatter}

\section{Introduction}\label{Sec:Intro}
 Let $\Pi_n^d$ denote the space of all parametric polynomials in $\mathbb{R}^d$ of degree at most $n$;
$\Pi_n^1 := \Pi_n$.

A \textit{B\'{e}zier curve} $P_n \in \Pi_n^d$ of degree $n \in \mathbb{N}$ is the following parametric curve:
\begin{equation}\label{Eq:GOriginal}
    P_n(t) := \sum_{i = 0}^{n} p_{i}B_i^{n}(t) \qquad(0\leq t\leq 1),
\end{equation}
where $p_0,p_1,\ldots, p_n \in \mathbb{R}^d$ are so-called \textit{control points}, and  $B_0^n,B_1^n,\ldots,B_n^n$ are the \textit{Bernstein polynomials of degree $n$} given by
\begin{equation}\label{Eq:Bernstein}
B_i^n(t) := \binom{n}{i}t^i(1-t)^{n-i} \qquad (0 \leq i \leq n).
\end{equation}
In this paper, we consider the following problem.
\begin{problem}\,[\textsf{$G^{k,l}$-constrained multi-degree reduction}]  \label{P:GProblem}
\ \\ \textit{For a given B\'{e}zier curve $P_n$ of degree $n$,
find a B\'{e}zier curve $R_m$ of lower degree $m$,}
\begin{equation}\label{Eq:GReduced}
R_m(t) := \sum_{i=0}^{m}r_iB_i^m(t)\qquad (0\leq t\leq 1),
\end{equation}
\textit{so that the following conditions are satisfied:}
\begin{description}
\itemsep4pt
\item[(i)] \textit{$P_n$ and $R_m$ are 	$G^{k,l}$-continuous ($-1 \leq k,l \leq 3$ and $k+l<m-1$)
at the endpoints, i.e.,}
\begin{equation}\label{Eq:Constraints}
\left.\begin{array}{ll}
\displaystyle \frac{\mbox{d}^i}{\mbox{d}t^i} R_m(t) = \frac{\mbox{d}^i}{\mbox{d}t^i} P_n(\varphi(t)) &\quad (t = 0; \ i = 0, 1,\dots, k),\\[1ex]
\displaystyle \frac{\mbox{d}^j}{\mbox{d}t^j} R_m(t) = \frac{\mbox{d}^j}{\mbox{d}t^j} P_n(\varphi(t)) &\quad (t = 1; \ j =
0, 1,\dots, l),
\end{array}\right\}
\end{equation}
\textit{where $\varphi:[0,1]\to[0,1]$ is a strictly increasing function with $\varphi(0) = 0$ and $\varphi(1) = 1$;}

\item[(ii)] \textit{value of the squared $L_2$-error}
	\begin{equation*}
 ||P_n-R_m||_{L_2}^2 := \int_0^1 (1-t)^{\alpha}t^{\beta}||P_n(t)-R_m(t)||^2 \mbox{d}t
\qquad (\alpha, \ \beta > -1)
\end{equation*}
\textit{is minimized in the space $\Pi_m^d$, where $||\cdot||$ is the Euclidean vector norm.}
\end{description}
\end{problem}
Problems of the above type have been recently discussed in several papers~\cite{Lu13,LW06,WL07,RM11,RM13,ZW10,ZWY13},
usually under simplifying assumptions $\varphi'(0)=\varphi'(1)=1$, which implied, for example, the hybrid
$C^{1,1}/G^{2,2}$-constrained degree reduction, meaning that we impose constraints of
$C^{1,1}$-continuity, followed by $G^{2,2}$-continuity, at the endpoints.
Most of the known algorithms solve a system of normal equations to get control points of the multi-degree reduced curve~\eqref{Eq:GReduced}.
Consequently, solution depends on the inverse of a certain matrix, so the obtained formulas are not truly explicit and the cost of the method is high (see, e.g.,~\cite{Lu13,WL07,RM13}).
For extensive lists of references, see the recent papers of Lu~\cite{Lu13}, or Rababah and Mann~\cite{RM13}.
The conventional problem of degree reduction differs from Problem~\ref{P:GProblem} in considering, instead of condition (i), the $C^{k,l}$-continuity at the endpoints of curves, i.e.,
\begin{equation}\label{Eq:CConstraints}
\left.\begin{array}{ll}
R^{(i)}_m(0) = P^{(i)}_n(0) \qquad (i = 0, 1,\dots, k),\\
R^{(j)}_m(1) = P^{(j)}_n(1) \qquad (j = 0, 1,\dots, l).
\end{array}\right\}
\end{equation}
In the past $30$ years, many papers dealing with this problem have been published (see, e.g.,~\cite{CW02,Eck95,Sun05,SL04,WL09}).
In particular,  in~\cite{WL09},  two of us have proposed a method based on the use of the so-called \textit{dual Bernstein polynomials}, which has complexity $O(mn)$, the least among the existing algorithms.
In the present paper, we apply an extended version of this method as an essential part of the algorithms
of solving Problem~\ref{P:GProblem}. Such an approach allows us to avoid matrix inversion.
Assuming that $-1 \leq k,l \leq 3$ and including the hybrid cases, there are $37$ continuity cases which require computation of the continuity parameters. Those variants of the problem differ, and we have not proven that in each case a unique solution exists.

The outline of the paper is as follows. Section~\ref{Sec:Pre} contains a preliminary material. In Section~\ref{Sec:Geo},
we relate the $G^{k,l}$-continuity conditions with the control points of the curves $P_n$ and $R_m$. Section~\ref{Sec:Prob}
brings complete solutions of Problem~\ref{P:GProblem}, with and without the simplifying assumptions.
Section~\ref{Sec:Alg} deals with algorithmic implementation of the proposed methods. In Section~\ref{Sec:Ex}, we give some examples showing efficiency of our methods. Conclusions are given in Section~\ref{Sec:Conc}.

\section{Preliminaries}\label{Sec:Pre}

In this section, we introduce necessary definitions and notation.

We define the inner product $\langle\cdot,\cdot\rangle_{\alpha,\beta}$ by
\begin{equation}\label{Eq:inprod}
\langle f,g\rangle_{\alpha,\beta} := \int_0^1 (1-t)^{\alpha}t^{\beta}f(t)g(t) \mbox{d}t \qquad (\alpha, \ \beta > -1).
\end{equation}
There is a unique \textit{dual Bernstein polynomial basis of degree $n$}
\begin{equation*}
D_0^{n},D_1^{n},\ldots, D_n^{n} \in \Pi_n,
\end{equation*}
associated with the basis~\eqref{Eq:Bernstein}, so that
\begin{equation*}
\left\langle D_i^{n},B_j^n\right\rangle_{\alpha,\beta} =\delta_{ij} \qquad (i,j=0,1,\ldots, n),
\end{equation*}
where $\delta_{ij}$ equals $1$ if $i=j$, and $0$ otherwise.

Given the integers $k, l$ such that $k, l \geq -1$ and $k+l<n-1$, let $\Pi_n^{(k,l)}$ be the space of all
polynomials of degree at most $n$, whose derivatives of orders $0,1,\ldots,k$ at $t=0$, as well as
derivatives of orders $0,1,\ldots,l$ at $t=1$, vanish. We use the convention that derivative of order $0$ of a function is the function itself.
Clearly, $\mbox{dim}$ $\Pi_n^{(k,l)} = n-k-l-1$,
and the Bernstein polynomials $\left\{B_{k+1}^{n},B_{k+2}^{n},\ldots, B_{n-l-1}^{n}\right\}$
form a basis of this space. There is a unique \textit{dual constrained Bernstein polynomial basis of degree $n$}
\begin{equation*}
\left\{D_{k+1}^{(n,k,l)},D_{k+2}^{(n,k,l)},\ldots, D_{n-l-1}^{(n,k,l)}\right\}
\subset \Pi_n^{(k,l)}
\end{equation*}
satisfying the relation $\left\langle D_i^{(n,k,l)},B_j^n\right\rangle_{\alpha,\beta} =\delta_{ij} \ (i,j=k+1,k+2,\ldots, n-l-1)$.
Obviously, we have $D_i^{(n,-1,-1)}=D_i^{n}$, which corresponds to the case without any constraints. For properties of the polynomials $B_i^n$ and $D_i^{n}$, see~\cite[Appendix A]{WL09}.

\textit{Forward difference operator} is given by
$$
    \Delta^0q_i := q_i,\quad
    \Delta^kq_i := \Delta^{k-1}q_{i+1} - \Delta^{k-1}q_{i} \quad(k =1,2,\ldots).
$$

We use $C^{p,q}/G^{k,l}$ notation to describe the hybrid constraints, where $p,q \in \{-,1\}$ and
($k \geq 2$ or $l \geq 2$). In the case of $k \geq 2$ and $p = 1$, we set $\varphi'(0) := 1$.
Similarly, for $l \geq 2$ and $q = 1$, we set $\varphi'(1) := 1$. Setting $p := -$, $q := -$ means that we do not fix
$\varphi'(0)$, $\varphi'(1)$, respectively. Clearly, $C^{-,-}/G^{k,l}$ denotes $G^{k,l}$.

\section{Geometric continuity}\label{Sec:Geo}

In this section, we relate the $G^{k,l}$-continuity conditions~\eqref{Eq:Constraints} with the
control points. We limit ourselves to $k,l \leq 3$ cases, which are the most important from a practical point of view.

Remark that the control points $r_1,\,r_2,\ldots,\,r_k$  depend on the parameters
$$
\lambda_j := \varphi^{(j)}(0) \qquad (j=1,2,\ldots,k),
$$
while the points $r_{m-1},r_{m-2},\ldots,r_{m-l}$ depend on
$$
\mu_j := \varphi^{(j)}(1) \qquad (j=1,2,\ldots,l).
$$

Now, let us recall the well known formulas (see \cite{Lu14}, also \cite{Lu13,RM11}). When $k = 3$, we have:
\begin{align}
\displaystyle & r_0 =  p_0, \quad r_1 = p_0 + \frac{n}{m}\lambda_1\Delta p_0,\label{Eq:GeoCondsGk1}\\[2ex]
\displaystyle & r_2 =  p_0 + \frac{n}{m}\left[2\lambda_1 + \frac{1}{m-1}\lambda_2\right]\Delta p_0 + \frac{(n-1)_2}{(m-1)_2}\lambda_1^2\Delta^2p_0,\label{Eq:GeoCondsGk2}\\[2ex]
\displaystyle & r_3 =  p_0 + \frac{n}{m}\left[3\lambda_1 + \frac{3}{m-1}\lambda_2  + \frac{1}{(m-2)_2}\lambda_3\right]\Delta p_0 \nonumber \\
& \hphantom{r_2 =  } + 3\frac{(n-1)_2}{(m-1)_2}\left[\lambda_1^2 + \frac{1}{m-2}\lambda_1\lambda_2\right]\Delta^2p_0 + \frac{(n-2)_3}{(m-2)_3}\lambda_1^3\Delta^3 p_0.\label{Eq:GeoCondsGk3}
\end{align}
In the case of $k = 2$, we use~\eqref{Eq:GeoCondsGk1}~and~\eqref{Eq:GeoCondsGk2}. For $k = 1$, formulas~\eqref{Eq:GeoCondsGk1} hold.
Analogously, when $l = 3$, we have:
\begin{align}
\displaystyle & r_m = p_n, \quad r_{m-1} = p_n - \frac{n}{m}\mu_1\Delta p_{n-1},\label{Eq:GeoCondsGl1}\\[2ex]
\displaystyle & r_{m-2} = p_n - \frac{n}{m}\left[2\mu_1 - \frac{1}{m-1}\mu_2\right]\Delta p_{n-1} + \frac{(n-1)_2}{(m-1)_2}\mu_1^2\Delta^2 p_{n-2},\label{Eq:GeoCondsGl2}\\[2ex]
\displaystyle & r_{m-3} = p_n - \frac{n}{m}\left[3\mu_1 - \frac{3}{m-1}\mu_2 + \frac{1}{(m-2)_2}\mu_3\right]\Delta p_{n-1} \nonumber \\
& \hphantom{r_{m-3} =  } + 3\frac{(n-1)_2}{(m-1)_2}\left[\mu_1^2 - \frac{1}{m-2}\mu_1\mu_2\right]\Delta^2p_{n-2} - \frac{(n-2)_3}{(m-2)_3}\mu_1^3\Delta^3 p_{n-3}.\label{Eq:GeoCondsGl3}
\end{align}
In the case of $l = 2$, we use~\eqref{Eq:GeoCondsGl1}~and~\eqref{Eq:GeoCondsGl2}. For $l = 1$, formulas~\eqref{Eq:GeoCondsGl1} hold.

\section{$G^{k,l}$-constrained multi-degree reduction problem}\label{Sec:Prob}

\subsection{Multi-degree reduction of B\'ezier curves with prescribed boundary control points}\label{Subsec:Prob0}

First, we discuss the following \textit{model problem} of constrained multi-degree reduction:

\begin{problem}\,[\textsf{Multi-degree reduction with prescribed boundary control points}] \label{P:Problem0}
\ \\
\textit{Given a B\'{e}zier curve $P_n \in \Pi_n^d$,}
\begin{equation*}\label{Eq:Original0}
P_n(t) := \sum_{i=0}^{n}p_iB_i^n(t),
\end{equation*}
\textit{we look for a B\'{e}zier curve $R_m \in \Pi_m^d \ (m < n)$,}
\begin{equation}\label{Eq:Reduced0}
R_m(t) := \sum_{i=0}^{m}r_iB_i^m(t),
\end{equation}
\textit{having the prescribed control points $r_0,r_1,\ldots,r_k$ and $r_{m-l},r_{m-l+1},\ldots,r_{m}$,
that gives minimum value of the error}
\begin{equation}\label{Eq:Distance0}
E^{(\alpha,\beta)} := ||P_n-R_m||_{L_2}^2 = \int_0^1 (1-t)^{\alpha}t^{\beta}||P_n(t)-R_m(t)||^2 \mbox{d}t \qquad (\alpha, \ \beta > -1).
\end{equation}
\end{problem}
Given the points $p_i:=(p_{i1},p_{i2}\ldots,p_{id}) \in \mathbb{R}^d$ ($i=0,1,\ldots,n$) and $r_i:=(r_{i1},r_{i2}\ldots,r_{id}) \in \mathbb{R}^d$ ($i=0,1,\ldots,m$), we use notation $\mathbf{p}^h$, $\mathbf{r}^h$ for the vectors of $h$th coordinates of the points $p_0,p_1,\ldots,p_n$ and $r_0,r_1,\ldots,r_m$, respectively:
\[
	\mathbf{p}^h:=[p_{0h},p_{1h},\ldots, p_{nh}],\qquad\mathbf{r}^h:=[r_{0h},r_{1h},\ldots, r_{mh}]
	\qquad(h=1,2,\ldots,d).		
	\]
As an extension of the result given in~\cite{WL09} (see also~\cite{LW11}), we obtain the following theorem.

\begin{theorem}\label{Th:Problem0} 
	The \emph{inner} control points
$r_i=(r_{i1},r_{i2},\ldots,r_{id}) \ (k+1 \leq i \leq m-l-1)$
of the curve~\eqref{Eq:Reduced0}, being the solution of the Problem~\ref{P:Problem0}, are given by
\begin{eqnarray}\label{Eq:Solution0}
	r_i=\sum_{j=0}^{n}\upsilon_j\phi_{ij}
     \qquad (i = k+1,k+2,\ldots, m-l-1),
\end{eqnarray}
where
\begin{equation}
	\label{Eq:phi}
	\phi_{ij}:=\left\langle B_j^n,D_i^{(m,k,l)}\right\rangle_{\alpha,\beta},	
	\end{equation}
and
\begin{align}\label{Eq:Coeffsups}
	\upsilon_j :=& p_j -\binom{n}{j}^{-1} \left(\sum_{h=0}^{k}+\sum_{h=m-l}^{m}\right) \binom{n-m}{j-h}\binom{m}{h}
	r_{h} \qquad (j = 0,1,\ldots, n).
\end{align}

The squared $L_2$-error~\eqref{Eq:Distance0} is given by
\begin{equation}\label{Eq:Distance2}
E^{(\alpha,\beta)} = \sum_{h=1}^{d}\left\{I_{nn}(\mathbf{p}^h,\mathbf{p}^h) + I_{mm}(\mathbf{r}^h,\mathbf{r}^h) - 2I_{nm}(\mathbf{p}^h,\mathbf{r}^h)\right\},
\end{equation}
where for $\mathbf{a}:=[a_0,a_1,\ldots, a_N]$ and $\mathbf{b}:=[b_0,b_1,\ldots, b_M]$, we define
$$
I_{NM}(\mathbf{a},\mathbf{b}) := \frac{\mathrm{B}(\alpha+1,\beta+1)}{(\alpha+\beta+2)_{N+M}}
\sum_{i=0}^{N}\sum_{j=0}^{M}\binom{N}{i}\binom{M}{j}(\alpha+1)_{N+M-i-j}(\beta+1)_{i+j}a_ib_j,
$$
where $\mathrm{B}(\alpha,\beta):=\frac{\Gamma(\alpha)\Gamma(\beta)}{\Gamma(\alpha+\beta)}$ is the beta function.
\end{theorem}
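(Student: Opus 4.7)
The squared $L_2$-error separates by coordinates, $E^{(\alpha,\beta)}=\sum_{h=1}^d \langle P_n^h-R_m^h,\,P_n^h-R_m^h\rangle_{\alpha,\beta}$, so the minimization decouples into $d$ independent scalar problems; I work with one coordinate and suppress the superscript. Writing $R_m = \sum_{h=0}^{k} r_h B_h^m + \sum_{h=m-l}^{m} r_h B_h^m + \tilde R$ with $\tilde R := \sum_{i=k+1}^{m-l-1} r_i B_i^m \in \Pi_m^{(k,l)}$, and setting $S := P_n - \sum_{h=0}^{k} r_h B_h^m - \sum_{h=m-l}^{m} r_h B_h^m$, the problem reduces to minimizing $\langle S-\tilde R, S-\tilde R\rangle_{\alpha,\beta}$ over $\tilde R\in \Pi_m^{(k,l)}$, i.e., to computing the $\langle\cdot,\cdot\rangle_{\alpha,\beta}$-orthogonal projection of $S$ onto $\Pi_m^{(k,l)}=\mathrm{span}\{B_{k+1}^m,\ldots,B_{m-l-1}^m\}$.

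The key step is to read off the projection coefficients via the constrained dual basis. For $\tilde R\in\Pi_m^{(k,l)}$, biorthogonality $\langle D_i^{(m,k,l)},B_j^m\rangle_{\alpha,\beta}=\delta_{ij}$ gives $r_i=\langle \tilde R,D_i^{(m,k,l)}\rangle_{\alpha,\beta}$; the orthogonality condition $\langle S-\tilde R,v\rangle_{\alpha,\beta}=0$ for every $v\in\Pi_m^{(k,l)}$ then yields, upon taking $v=D_i^{(m,k,l)}\in\Pi_m^{(k,l)}$, the clean formula
\[
r_i \;=\; \langle S,D_i^{(m,k,l)}\rangle_{\alpha,\beta}\qquad(i=k+1,\ldots,m-l-1).
\]

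To transform this into~\eqref{Eq:Solution0}, I degree-elevate each boundary Bernstein polynomial from degree $m$ to degree $n$ via $B_h^m = \sum_j \binom{n}{j}^{-1}\binom{m}{h}\binom{n-m}{j-h}\,B_j^n$, which rewrites $S$ in the degree-$n$ Bernstein basis as $S=\sum_{j=0}^n \upsilon_j B_j^n$ with $\upsilon_j$ exactly of the form~\eqref{Eq:Coeffsups}. Linearity of the inner product together with the definition~\eqref{Eq:phi} gives $r_i=\sum_j \upsilon_j \phi_{ij}$, which is~\eqref{Eq:Solution0}; re-attaching the coordinate index delivers the vector formula. For the error identity~\eqref{Eq:Distance2}, I expand $\|P_n-R_m\|_{L_2}^2$ coordinate-wise into $\langle P_n,P_n\rangle+\langle R_m,R_m\rangle-2\langle P_n,R_m\rangle$ and evaluate each Bernstein--Bernstein cross product through the closed form
\[
\langle B_i^N,B_j^M\rangle_{\alpha,\beta} \;=\; \binom{N}{i}\binom{M}{j}\,\frac{(\alpha+1)_{N+M-i-j}(\beta+1)_{i+j}}{(\alpha+\beta+2)_{N+M}}\,\mathrm{B}(\alpha{+}1,\beta{+}1),
\]
whereupon the resulting double sums collapse into the bilinear forms $I_{NN}$, $I_{MM}$, $I_{NM}$.

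The main obstacle is the projection identification in the second step: one has to justify that projection onto the subspace $\Pi_m^{(k,l)}$ yields expansion coefficients given by inner products against the \emph{constrained} dual basis (which is biorthogonal only inside that subspace), rather than against the full degree-$m$ dual basis. This rests on the inclusion $D_i^{(m,k,l)}\in\Pi_m^{(k,l)}$, which makes it a legitimate test function in the variational characterization of the projection. Once this is in place, the remaining degree-elevation computation producing $\upsilon_j$ and the beta-integral evaluation yielding $I_{NM}$ are direct, if bookkeeping-heavy, manipulations.
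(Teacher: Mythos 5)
Your proposal is correct and follows essentially the same route as the paper: split $R_m$ into the prescribed boundary part and the inner part in $\Pi_m^{(k,l)}$, degree-elevate the boundary part to degree $n$ to obtain the coefficients $\upsilon_j$, characterize the minimizer as the orthogonal projection onto $\Pi_m^{(k,l)}$, and read off the coefficients via the constrained dual basis; the error formula likewise comes from expanding the norm and evaluating the Bernstein--Bernstein inner products. The only difference is cosmetic (you project first and degree-elevate afterwards, and you spell out the test-function argument that the paper compresses into ``$B_i^m$ and $D_i^{(m,k,l)}$ are dual bases in $\Pi_m^{(k,l)}$'').
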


\begin{proof}
Let us write
$$
R_m(t) = S_m(t) + T_m(t),
$$
where
$$
S_m(t) := \sum_{i=k+1}^{m-l-1}r_iB_i^m(t), \quad T_m(t) := \left(\sum_{i=0}^{k}+\sum_{i=m-l}^{m}\right)r_iB_i^m(t).
$$
Using the degree elevation formula (see, e.g.,~\cite[\S6.10]{Far02};  we adopt the usual convention that $\binom{u}{v}=0$ if $v<0$ or $v>u$)
$$
B_i^m(t) =\binom{m}{i} \sum_{h=0}^{n} \binom{n-m}{h-i}\binom{n}{h}^{-1}B_{h}^{n}(t),
$$
we write
$$
T_m(t) = \sum_{j=0}^{n}d_jB_j^n(t),
$$
where
$$
d_j := \binom{n}{j}^{-1}\left(\sum_{h=0}^{k}+\sum_{h=m-l}^{m}\right)
\binom{n-m}{j-h}\binom{m}{h}r_h.
$$

Now, we observe that
$$
||P_n-R_m||_{L_2}^2 = ||W_n-S_m||_{L_2}^2 = \sum_{h=1}^d\int_0^1(1-t)^{\alpha}t^{\beta}\left[W^h_n(t)-S^h_m(t)\right]^2\mbox{d}t,
$$
where
\begin{align*}
&W_n(t) := \left[W_n^1(t),W_n^2(t),\ldots,W_n^d(t)\right] = P_n(t) - T_m(t) = \sum_{i=0}^{n} \upsilon_iB_i^n(t),\\
&S_m(t) := \left[S_m^1(t),S_m^2(t),\ldots,S_m^d(t)\right],
\end{align*}
with
$$
\upsilon_i := p_i-d_i.
$$
Thus, we are looking for the best least squares approximation for $W_n^h \ (h = 1,2,\ldots,d)$ in the space $\Pi_m^{(k,l)}$.
Remembering that $B_i^m$ and $D_i^{(m,k,l)}\ (k+1 \leq i \leq m-l-1)$ are the dual bases in the space
$\Pi_m^{(k,l)}$, we obtain
\begin{equation*}
r_i = \sum_{j=0}^{n}\upsilon_j\left\langle B_j^n,D_i^{(m,k,l)}\right\rangle_{\alpha,\beta}=\sum_{j=0}^{n}\upsilon_j\phi_{ij} \qquad (i = k+1,k+2,\ldots, m-l-1),
\end{equation*}
which is  the formula~\eqref{Eq:Solution0}.

Proof of~\eqref{Eq:Distance2} uses an argument similar to the one given in~\cite{WL09}.\qed
\end{proof}

\begin{remark}\label{R:AuxAlg}
Let us define the quantities $\psi_{ij}$  ($i=k+1,k+2,\ldots,m-l-1;\ j =0,1,\ldots,n$), related to the coefficients $\phi_{ij}$ (cf.~\eqref{Eq:phi}) by  the following formula:
\begin{equation}\label{Eq:psi}
	\phi_{ij} := \binom{m-k-l-2}{i-k-1}\binom{m}{i}^{-1}\binom{n}{j}
	\frac{(\alpha+l+2)_{n-j}(\beta+k+2)_{j}}{(\alpha+l+2)_{l+1}(\beta+k+2)_{k+1}}\psi_{ij}.
\end{equation}
Observe that the quantities $\psi_{ij}$ can be put in a rectangular table and the entries of this $\psi$-table can be computed
using \cite[Algorithm 4.2]{LW11}, assuming that $c_1 := k+1$, $c_2 := l+1$, $\alpha_1 := \alpha$ and $\alpha_2 := \beta$.
Note that the complexity of this algorithm is $O(mn)$.
\end{remark}

\subsection{$G^{k,l}$-constrained multi-degree reduction}\label{Subsec:Prob1}

Coming back to the problem of $G^{k,l}$-constrained multi-degree reduction (see~Problem~\ref{P:GProblem}), let us
notice that the formulas~\eqref{Eq:GeoCondsGk1}--\eqref{Eq:GeoCondsGl3} with \textit{fixed parameters} $\lambda_i$ and $\mu_j$
(cf.~\S\ref{Sec:Geo}) constitute constraints of the form demanded in Problem~\ref{P:Problem0}. As a result,
 the  control points~\eqref{Eq:Solution0} depend on these parameters.

Now,  the \textit{optimum values} of the parameters can be obtained by minimizing
the error function~\eqref{Eq:Distance2},
\begin{equation}\label{Eq:E2fun}
	E^{(\alpha,\beta)} \equiv E^{(\alpha,\beta)}(\lambda_1,\lambda_2,\ldots,\lambda_k,\mu_1,\mu_2,\ldots,\mu_l),
\end{equation}	
	depending on $\{\lambda_i\}$ and $\{\mu_j\}$ \textit{via} formulas~\eqref{Eq:GeoCondsGk1}--\eqref{Eq:GeoCondsGl3} and~\eqref{Eq:Solution0}.
	
	For a minimum of function~\eqref{Eq:E2fun}, it is necessary that the derivatives of $E^{(\alpha,\beta)}$
 with respect to the parameters are zero, which yields the system
\begin{equation}\label{Eq:GeoSystem-mod}
	\begin{array}{l}
         \displaystyle        \sum_{h=1}^{d}\sum_{j=u}^{m-l-1}\left[F_{mj}(\mathbf{r}^h)-F_{nj}(\mathbf{p}^h)
                                        \right]\frac{\partial r_{jh}}{\partial \lambda_u}
                                        = 0 \qquad (u=1,2,\ldots, k),\\[3ex]
                    \displaystyle \sum_{h=1}^{d}\sum_{j=k+1}^{m-v}\left[F_{mj}(\mathbf{r}^h)-F_{nj}(\mathbf{p}^h)
                                        \right]\frac{\partial r_{jh}}{\partial \mu_v}
                                        = 0 \qquad (v=1,2,\ldots, l),
                                   \end{array}
                                   \end{equation}
where we use notation
\begin{equation*}
F_{tj}(\mathbf{q}):=\frac{1}{(\alpha+\beta+m+2)_{t}}\binom{m}{j}\sum_{i=0}^{t}\binom{t}{i}
(\alpha+1)_{t+m-i-j}(\beta+1)_{i+j}q_{i}
\end{equation*}
with $\mathbf{q}=[q_0,q_1,\ldots,q_t]$.

In the case of $k = l = 3$, we compute the partial derivatives of $h$th coordinates of the control points~\eqref{Eq:GeoCondsGk1}--\eqref{Eq:GeoCondsGl3}. We obtain:
\begin{align}
	\label{Eq:u1}
\displaystyle \frac{\partial r_{ih}}{\partial \lambda_1} =&
\left\{\begin{array}{ll}
	  \frac{n}{m}\Delta p_{0h}&\quad (i=1),  \\[2ex]
	  2\frac{n}{m}\Delta p_{0h}+ 2\lambda_1\frac{(n-1)_2}{(m-1)_2}\Delta^2 p_{0h}&\quad (i=2),\\[2ex]
	  3\frac{n}{m}\Delta p_{0h}  + \left[2\lambda_1
	                              + \frac{1}{m-2}\lambda_2\right]3\frac{(n-1)_2}{(m-1)_2}\Delta^2 p_{0h}
	  + 3\lambda_1^2\frac{(n-2)_3}{(m-2)_3}\Delta^3 p_{0h}  &\quad (i=3),\\[2ex]
	  0 &\quad(i = 0;\: m-3\le i\le m),
\end{array}\right.\\
	\label{Eq:u2}
\displaystyle \frac{\partial r_{ih}}{\partial \lambda_2} =&
\left\{\begin{array}{ll}
    \frac{n}{(m-1)_2}\Delta p_{0h}&\quad (i=2),\\[2ex]
    3\frac{n}{(m-1)_2}\Delta p_{0h}
                     + 3\lambda_1\frac{(n-1)_2}{(m-2)_3}\Delta^2 p_{0h}&\quad (i=3),\\[2ex]
    0 &\quad (i = 0,1; \: m-3\le i\le m),
\end{array}\right.\\
	\label{Eq:u3}
\displaystyle \frac{\partial r_{ih}}{\partial \lambda_3} =&
\left\{\begin{array}{ll}
    \frac{n}{(m-2)_3}\Delta p_{0h}&\quad (i=3),\\[2ex]
    0 &\quad (i = 0,1,2; \: m-3\le i\le m),
\end{array}
\right.
\end{align}

\begin{align}
	\label{Eq:v1}
\displaystyle \frac{\partial r_{ih}}{\partial \mu_1} =&
\left\{\begin{array}{ll}
	  -\frac{n}{m}\Delta p_{n-1,h}&\quad (i=m-1),  \\[2ex]
	  -2\frac{n}{m}\Delta p_{n-1,h}
             + 2\mu_1\frac{(n-1)_2}{(m-1)_2}\Delta^2 p_{n-2,h}&\quad (i=m-2),\\[2ex]
	  -3\frac{n}{m}\Delta p_{n-1,h}  + \left[2\mu_1
	  - \frac{1}{m-2}\mu_2\right]3\frac{(n-1)_2}{(m-1)_2}\Delta^2 p_{n-2,h}
	  - 3 \mu_1^2\frac{(n-2)_3}{(m-2)_3}\Delta^3 p_{n-3,h}
	  &\quad (i=m-3),\\[2ex]
	  0 &\quad(0\le i\le3;\: i=m),
\end{array}\right.\\
	\label{Eq:v2}
\displaystyle \frac{\partial r_{ih}}{\partial \mu_2} =&
\left\{\begin{array}{ll}	
	  \frac{n}{(m-1)_2}\Delta p_{n-1,h}&\quad (i=m-2),\\[2ex]
	  3\frac{n}{(m-1)_2}\Delta p_{n-1,h}
                     - 3\mu_1\frac{(n-1)_2}{(m-2)_3}\Delta^2 p_{n-2,h}
	  &\quad (i=m-3),\\[2ex]
	  0 &\quad(0\le i\le3;\: i=m-1,m),
\end{array}\right.\\
	\label{Eq:v3}
\displaystyle \frac{\partial r_{ih}}{\partial \mu_3} =&
\left\{\begin{array}{ll}	 	
	  -\frac{n}{(m-2)_3}\Delta p_{n-1,h}&\quad (i=m-3),\\[2ex]
	  0 &\quad(0\le i\le3;\: m-2\le i\le m).
\end{array}
\right.
\end{align}

Notice that the partial derivatives of $h$th coordinates of control points~\eqref{Eq:Solution0}
depend on~\eqref{Eq:u1}--\eqref{Eq:v3} in the following way:
\begin{align}
\displaystyle & \frac{\partial r_{ih}}{\partial \lambda_u} = -\sum_{j=0}^n \binom{n}{j}^{-1}\sum_{g=u}^k
\binom{n-m}{j-g}\binom{m}{g}\phi_{ij}\frac{\partial r_{gh}}{\partial \lambda_u},\label{Eq:InnerDerLambda} \\[3ex]
\displaystyle & \frac{\partial r_{ih}}{\partial \mu_v} = -\sum_{j=0}^n \binom{n}{j}^{-1}\sum_{g=m-l}^{m-v}
\binom{n-m}{j-g}\binom{m}{g}\phi_{ij}\frac{\partial r_{gh}}{\partial \mu_v}. \label{Eq:InnerDerMu}
\end{align}

One can easily see, that when $k, l \leq 3$, we compute $\frac{\partial r_{ih}}{\partial \lambda_u}$, $\frac{\partial r_{ih}}{\partial \mu_v}$ by~\eqref{Eq:InnerDerLambda},~\eqref{Eq:InnerDerMu} if $k<i<m-l$, and by~\eqref{Eq:u1}--\eqref{Eq:v3} otherwise.
Finally, we put the expressions~\eqref{Eq:u1}--\eqref{Eq:InnerDerMu} into the equations of system~\eqref{Eq:GeoSystem-mod}.

Observe that for $k \geq 2$ or $l \geq 2$, system~\eqref{Eq:GeoSystem-mod} is nonlinear,
which makes it quite difficult to solve. Furthermore, from a practical point of view, we additionally require
that $\lambda_1, \mu_1 > 0$, which results in the same directions of tangent vectors at the endpoints of curves~\eqref{Eq:GOriginal} and~\eqref{Eq:GReduced}.
Therefore, to guarantee that these conditions will be satisfied, it is not enough just to solve the system~\eqref{Eq:GeoSystem-mod}.

Now, let us discuss two possible ways of determining the values of geometric continuity parameters.

\subsubsection{Determining the  $G^{k,l}$ parameters using optimization methods}\label{SubSec:NLPGeo}

It is easy to check that if ($k = 1$ and $l \leq 1$) or ($l = 1$ and $k \leq 1$),
then the error~\eqref{Eq:Distance2} is a~quadratic function of
continuity parameters.

In the case of ($k = 2$ and $l \leq 2$) or ($l = 2$ and $k \leq 2$),
the error~\eqref{Eq:Distance2} is a fourth-degree polynomial function of
continuity parameters.

For ($k = 3$ and $l \leq 3$) or ($l = 3$ and $k \leq 3$),
the error~\eqref{Eq:Distance2} is a sixth-degree polynomial function of
continuity parameters.

To find the optimum values of parameters $\lambda_1$, $\mu_1$ in the case of $G^{1,1}$-constrained
multi-degree reduction problem, assuming that $\alpha, \beta = 0$, Lu and Wang~\cite{WL07}
solve the quadratic programming problem, subject to the constraints
\begin{equation}\label{Eq:PositiveCons}
\lambda_1 \geq d_0, \quad  \mu_1 \geq d_1,
\end{equation}
where $d_0$ and $d_1$ are positive lower bounds, usually prescribed to small values
(they set $10^{-4}$ for both lower bounds in the examples section).
Such approach can be used in the cases which result in the quadratic error function~\eqref{Eq:Distance2}.
One can solve the quadratic programming problem using, e.g., an \textit{iterative active-set method}, which is implemented in many software libraries.
The active-set mechanism used by standard quadratic solvers is described in~\cite[\S6.5]{BG97}.

Analogously, one can observe that for $k=2,3$ or $l=2,3$,
the problem of minimizing the error~\eqref{Eq:Distance2}, subject to the constraints~\eqref{Eq:PositiveCons} is a nonlinear programming problem. To solve it, one can use, for instance, a \textit{sequential quadratic programming (SQP)
method} (see, e.g.,~\cite[\S15.1]{BG97}), which is available in many software libraries.

\subsubsection{Determining the $C^{p,q}/G^{k,l}$ parameters by solving a system of linear equations}\label{SubSec:CGeo}

In the case of $G^{2,2}$, Rababah and Mann~\cite{RM11} simplified the problem by considering $C^{1,1}$-continuity at the endpoints,
i.e., they set $\lambda_1 = \mu_1 := 1$. Later, this approach was also used by Lu~\cite{Lu13}. In~\cite{RM13}, the same idea was used to simplify the $G^{3,3}$ case, and the authors noted that such approach leads to a system of linear equations.

Now, we generalize the above-described approach for any $k,\,l$ such that $-1 \leq k,l \leq 3$.
If $k \geq 2$, we set $\lambda_1 := 1$, which implies $C^{1}$-continuity at $t=0$ and
consequently, $G^{k,l}$ constraints become $C^{1,q}/G^{k,l}$ constraints, where $q \in \{-,1\}$.
Similarly, when $l \geq 2$, we set $\mu_1 := 1$, which implies $C^{1}$-continuity at $t=1$ and
consequently, $G^{k,l}$ constraints become $C^{p,1}/G^{k,l}$ constraints, where  $p \in \{-,1\}$.

Notice that in the cases of $k=2,3$ or $l=2,3$, the above-described method leads to the linear system~\eqref{Eq:GeoSystem-mod} and the error~\eqref{Eq:Distance2} is a quadratic function of the continuity parameters.
However, in the cases of $k = 1$ or $l = 1$, there is no guarantee that the solution satisfies $\lambda_1 > 0$ or $\mu_1 > 0$, respectively. In the case of solution with nonpositive values of these parameters, our choice is to solve a quadratic programming problem, subject to the constraints with prescribed positive lower bounds for the parameters (see~\eqref{Eq:PositiveCons}).

Observe that this approach uses no simplifying assumptions for $k, l \leq 1$.

Let us denote the above-described approach to Problem~\ref{P:GProblem} as \textit{$C^{p,q}/G^{k,l}$-constrained multi-degree reduction of B\'ezier curves}.

\section{Algorithms}\label{Sec:Alg}

In this section, we show the details of implementation of the proposed method of $G^{k,l}$-constrained multi-degree reduction of B\'ezier curves.
Moreover, we give a short description of $C^{p,q}/G^{k,l}$-constrained multi-degree reduction algorithm.

\subsection{$G^{k,l}$-constrained multi-degree reduction algorithm}\label{SubSec:GAlg}

Now, we give the method of solving Problem~\ref{P:GProblem}, summarized in the following two-phase algorithm.

Phase A of the algorithm consists in finding values of the parameters $\lambda_i$ and $\mu_j$ to minimize
the error~\eqref{Eq:Distance2}, which---by the results given in Theorem~\ref{Th:Problem0}---
depends only on these parameters. The idea is based on solving the quadratic or nonlinear programming problem (see~\S\ref{SubSec:NLPGeo}).
Notice that when $k,l < 1$, we can compute the $\phi_{ij}$ coefficients (cf.~\eqref{Eq:psi}) and omit the remaining steps of Phase A, since there are no continuity parameters to determine. During Phase B, we use the results of Theorem~\ref{Th:Problem0} and the obtained values of continuity parameters to compute control points $r_0, r_1,\dots, r_m$. Most of the known algorithms solve a system of normal equations, to get the inner control points of multi-degree reduced curve~\eqref{Eq:GReduced}. Such approach makes these points dependent on the inverse of a certain matrix. Our formulas do not require matrix inversion. What is more, the complexity of Phase B is $O(mn)$, which is significantly less than the cost of other known methods for this phase. The algorithm works for any $k$ and $l$ not exceeding 3.

\begin{algorithm}\,[\texttt{$G^{k,l}$-constrained multi-degree reduction}]\label{A:AlgorithmG}
	\ \\[0.5ex]
	\noindent \texttt{Data}: $\alpha,\,\beta$ -- parameters of the inner product
	\eqref{Eq:inprod};\\[0.5ex]
	\noindent \hphantom{\texttt{Data}:} $n$,  $p_0, p_1,\ldots, p_n$ -- degree and the control points of the B\'{e}zier curve~\eqref{Eq:GOriginal};\\[0.5ex]	
	\noindent \hphantom{\texttt{Data}:} $m$ -- degree of the reduced B\'{e}zier curve
	\eqref{Eq:GReduced};\\[0.5ex]
	\noindent \hphantom{\texttt{Data}:} $k,\,l$ -- orders of the $G$-continuity at the endpoints
	of the curve \eqref{Eq:GReduced};\\[0.5ex]
	\noindent \hphantom{\texttt{Data}:} $d_0,\,d_1$ -- lower bounds for the parameters $\lambda_1$ and $\mu_1$, respectively (cf. \S\ref{SubSec:NLPGeo}).\\[0.5ex]	
\noindent \texttt{Assumptions}: $n>m>0$;\; $-1\leq k,l\leq 3$;\; $k+l < m-1$;\; $d_0, d_1 > 0$;\; $\alpha, \beta > -1$.\\[0.5ex]
\noindent \texttt{Result}: control points $r_0, r_1,\ldots, r_m$ of the $G^{k,l}$-constrained multi-degree reduced B\'{e}zier curve~\eqref{Eq:GReduced}.\\

\noindent \textbf{Phase A}~\begin{description}
\itemsep4pt
\item[\texttt{Step I}] Compute $\{\phi_{ij}\}$ ($i=k+1,k+2,\ldots,m-l-1;\ j =0,1,\ldots,n$) by \cite[Algorithm 4.2]{LW11} and formula~\eqref{Eq:psi} (see Remark~\ref{R:AuxAlg}).
\item[\texttt{Step II}] Check if the remaining steps of Phase A can be omitted:\\[1ex]
\textbf{If} ($k < 1$) and ($l < 1$) \textbf{then} go to \texttt{Step VI}.
\item[\texttt{Step III}] Compute
	$E^{(\alpha,\beta)}(\lambda_1,\lambda_2,\ldots,\lambda_k,\mu_1,\mu_2,\ldots,\mu_l)$ by~\eqref{Eq:Distance2}.
\item[\texttt{Step IV}] Determine set $c$ of constraints:\\[1ex]
$c := \left\{\lambda_1 \geq d_0, \ \mu_1 \geq d_1\right\}$;\\[1ex]
\textbf{If} ($k < 1$) \textbf{then} $c := c \setminus \left\{\lambda_1 \geq d_0\right\}$;\\[1ex]
\textbf{If} ($l < 1$) \textbf{then} $c := c \setminus \left\{\mu_1 \geq d_1\right\}$.
\item[\texttt{Step V}] \mbox{}\\[1ex]
\textbf{If} ($k > 1$ or $l > 1$) \textbf{then}
\begin{itemize}
\item[]  obtain $\lambda_1,\lambda_2,\ldots,\lambda_k$, and $\mu_1,\mu_2,\ldots,\mu_l$ by solving the nonlinear programming problem of minimizing
                               the error~\eqref{Eq:Distance2}, subject to the constraints $c$;\\
\hspace*{-0.68cm} \textbf{else}
\item[]  obtain $\lambda_1,\lambda_2,\ldots,\lambda_k$, and $\mu_1,\mu_2,\ldots,\mu_l$ by solving the quadratic programming problem of minimizing the error~\eqref{Eq:Distance2}, subject to the constraints $c$.
\end{itemize}
\end{description}
\noindent \textbf{Phase B}
\begin{description}
\itemsep4pt
\item[\texttt{Step VI}] Compute
\vspace{1ex}
\begin{enumerate}
\itemsep3pt
\item $r_0,r_1,\ldots, r_k$ by~\eqref{Eq:GeoCondsGk1}--\eqref{Eq:GeoCondsGk3};

\item $r_{m-l},r_{m-l+1},\ldots, r_m$ by~\eqref{Eq:GeoCondsGl1}--\eqref{Eq:GeoCondsGl3}.
\end{enumerate}
\item[\texttt{Step VII}] Compute  $\upsilon_0,\upsilon_1,\ldots, \upsilon_n$ by~\eqref{Eq:Coeffsups}.	
\item[\texttt{Step VIII}] Compute  $r_{k+1},r_{k+2},\ldots, r_{m-l-1}$ by~\eqref{Eq:Solution0}.
\item[\texttt{Step IX}] Return the solution, i.e., the control points $r_0,r_1,\ldots, r_m$ of the
	reduced B\'{e}zier curve  \eqref{Eq:GReduced}.	
\end{description}
\end{algorithm}

\subsection{$C^{p,q}/G^{k,l}$-constrained multi-degree reduction algorithm}\label{SubSec:CGAlg}

Now, let us give the outline of the two-phase $C^{p,q}/G^{k,l}$-constrained multi-degree reduction algorithm.

Phase A of the algorithm implements the ideas discussed in~\S\ref{SubSec:CGeo}, therefore, it solves the system of linear equations~\eqref{Eq:GeoSystem-mod} to determine values of the continuity parameters. In the case of solution with nonpositive values of $\lambda_1$ or $\mu_1$, which can happen when $k = 1$ or $l = 1$, the algorithm solves a quadratic programming problem, subject to the constraints with prescribed positive lower bounds for the parameters  (see~\eqref{Eq:PositiveCons}). An example of a resulting B\'ezier curve that does not satisfy the positive condition can be found in
\cite[Fig. 1(a)]{Lu13}. We performed more than $40$ different tests (results of some of them are available in the next section). None of them caused such problem. Phase B is the same as for Algorithm~\ref{A:AlgorithmG}. The algorithm works for any $k$ and $l$ not exceeding 3.
For details, see our implementation in  Maple{\small \texttrademark}13 available on the website \url{http://www.ii.uni.wroc.pl/~pgo/GDegRed.mws}.

Obviously, Algorithm~\ref{A:AlgorithmG} costs more, but also produces more accurate results, since for the
$C^{p,q}/G^{k,l}$-constrained approach we additionally assume that $\lambda_1 = 1$ when $k > 1$, and
$\mu_1 = 1$ when $l > 1$.

\section{Examples}\label{Sec:Ex}
	
This section provides of the application of our $G^{k,l}$-constrained and $C^{p,q}/G^{k,l}$-constrained multi-degree reduction algorithms.
In each case, we give the least squares error $E^{(\alpha,\beta)}_2 := \sqrt{E^{(\alpha,\beta)}}$
and the maximum error
$$
E_{\infty} := \max_{t \in D_N} ||P_n(t) - R_m(t)|| \approx \max_{t \in [0,1]} ||P_n(t) - R_m(t)||,
$$
where $D_N := \left\{0, 1/N, 2/N,\ldots, 1\right\}$ with $N := 500$.

In our experiments, we consider the ``natural'' choices for the values of parameters $\alpha$, $\beta$, i.e., $(\alpha, \beta) \in \left\{(0,0), \left(\frac{1}{2},\frac{1}{2}\right), \left(-\frac{1}{2},\frac{1}{2}\right), \left(\frac{1}{2},-\frac{1}{2}\right), \left(-\frac{1}{2},-\frac{1}{2}\right)\right\}$, and set the lower bounds $d_0, d_1$ of $\lambda_1, \mu_1$ to $10^{-4}$ (see~\eqref{Eq:PositiveCons}).

\noindent Taking into account the different types of continuity constraints, we compare the following cases:
\begin{itemize}\setlength{\itemindent}{0.35cm}
\item[(i)]  $C^{k,l}$-constrained case (see~\eqref{Eq:CConstraints}), which can be solved by using Theorem~\ref{Th:Problem0};
\item[(ii)] $C^{p,q}/G^{k,l}$-constrained case, solved by the algorithm described in~\S\ref{SubSec:CGAlg};
\item[(iii)] $G^{k,l}$-constrained case, solved by Algorithm~\ref{A:AlgorithmG}.
\end{itemize}

Results of the experiments have been obtained on a computer with \texttt{Intel Core i5-3337U 1.8GHz} processor and \texttt{8GB} of \texttt{RAM}, using $32$-digit arithmetic.  Maple{\small \texttrademark}13 worksheet containing implementation of the algorithms and tests is available on the website \url{http://www.ii.uni.wroc.pl/~pgo/GDegRed.mws}. We use  Maple{\small \texttrademark} \texttt{fsolve} procedure, in the $C^{p,q}/G^{k,l}$ case, to solve the system of linear equations, and \texttt{QPSolve}, \texttt{NLPSolve} procedures, to solve the quadratic and nonlinear programming problems, respectively. \texttt{QPSolve} uses the iterative active-set method, and for \texttt{NLPSolve} we select \texttt{sqp} method.
Initial points for both procedures correspond to the values of continuity parameters in the $C^{k,l}$ case.

\begin{example}\label{Ex:alpha}
First, let us consider degree eleven B\'{e}zier curve which is an outline of the font ``alpha'' (for the control points, see~\cite[Example 6.1]{WL09}).
The results of multi-degree reduction are given in Table~\ref{Tab:1}. Figs.~\ref{Fig:1a} and~\ref{Fig:1b} illustrate two of the considered cases.
One can see, that when it comes to minimizing $E_{\infty}$ error, usually a good choice is $\alpha = \beta = -\frac{1}{2}$.
As expected, solution to the $G^{k,l}$ case is the most accurate, while $C^{p,q}/G^{k,l}$ approach gives less precise results.
$C^{k,l}$ conditions seem to be too restrictive, especially for $k$ or $l$ exceeding $2$.

\begin{figure}[H]
\captionsetup{margin=0pt, font={scriptsize}}
\begin{center}
\setlength{\tabcolsep}{0mm}
\begin{tabular}{c}
\subfloat[]{\label{Fig:1a}\includegraphics[width=0.48\textwidth]{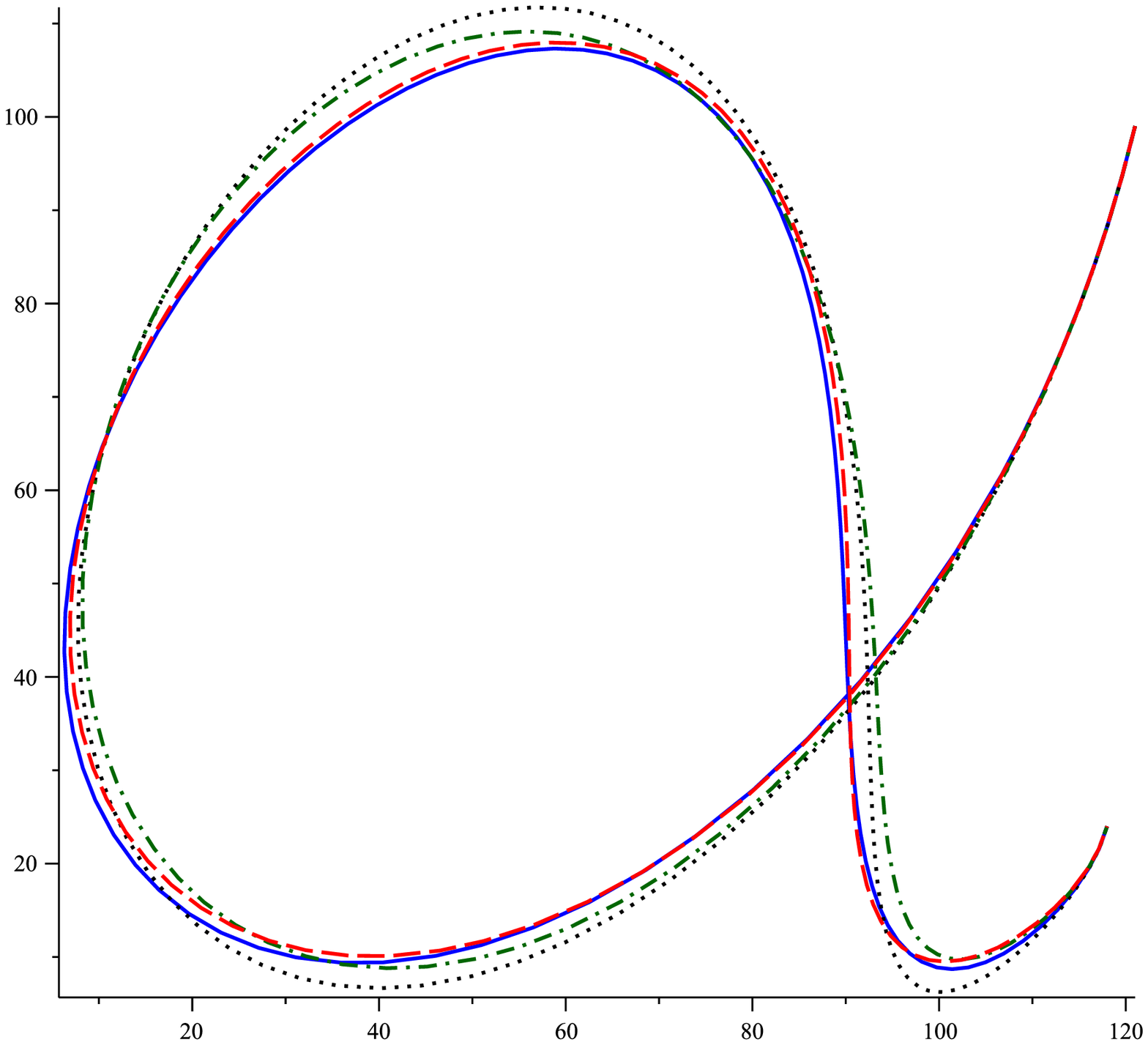}}
\subfloat[]{\label{Fig:1b}\includegraphics[width=0.52\textwidth]{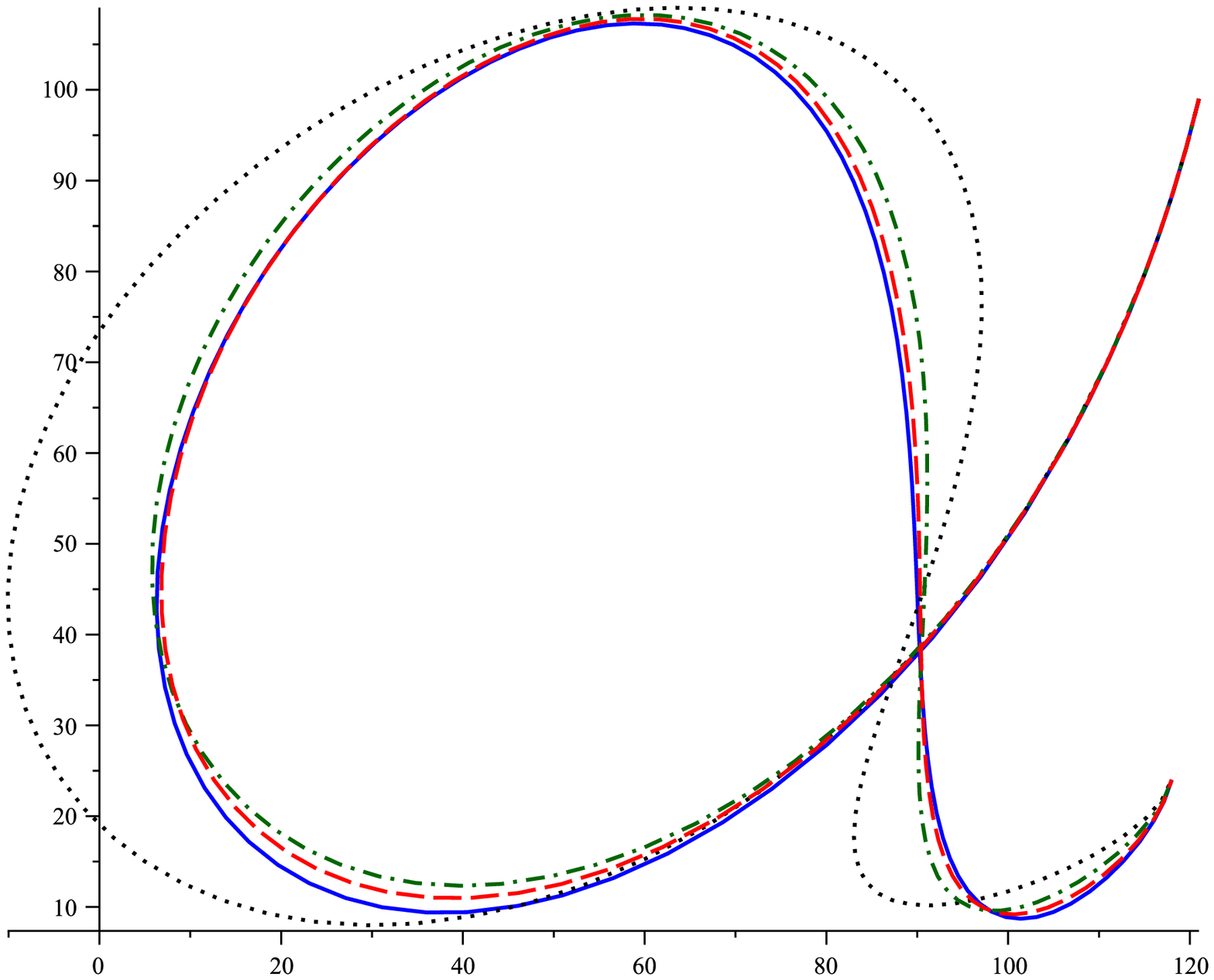}}
\end{tabular}
    \caption{Multi-degree reduction of degree eleven curve (blue solid line) to degree seven
     with $C^{k,l}$ (black dotted line), $C^{p,q}/G^{k,l}$ (green dash-dotted line)
    and $G^{k,l}$ (red dashed line) continuity constraints; parameters: (a) $\alpha = \beta = -\frac{1}{2}$,
    $p = q = 1$, $k = l = 2$, and  (b) $\alpha = \beta = -\frac{1}{2}$, $p = 1$, $q = -$, $k = 3$ and $l = 1$.}
\end{center}
\end{figure}

\begin{table}[H]
\captionsetup{margin=0pt, font={scriptsize}}
\centering
\ra{1.3}
\scalebox{0.98}{
\begin{tabular}{@{}rrrrrrrlcclcclcc@{}}
\toprule  \multicolumn{7}{c}{Parameters} & \phantom{abc} & \multicolumn{2}{c}{$C^{k,l}$ solution}
& \phantom{abc} & \multicolumn{2}{c}{$C^{p,q}/G^{k,l}$ solution}
& \phantom{abc} & \multicolumn{2}{c}{$G^{k,l}$ solution}
\\ \cmidrule{1-7} \cmidrule{9-10} \cmidrule{12-13} \cmidrule{15-16} $m$ & $k$ & $l$ & $p$ & $q$ & $\alpha$ & $\beta$ & \phantom{abc} &
$E_{2}^{(\alpha,\beta)}$ & $E_{\infty}$  & \phantom{abc} & $E_{2}^{(\alpha,\beta)}$ & $E_{\infty}$  & \phantom{abc} &
$E_{2}^{(\alpha,\beta)}$ & $E_{\infty}$
\\ \midrule
$7$ &  $2$ & $2$ & $1$& $1$ & $0$ & $0$ & &$3.73$ &$5.97$ & &$2.83$ &$5.27$ & &$0.93$ &$2.26$\\
  &    &    & &    &$-\frac{1}{2}$ & $-\frac{1}{2}$ & &$5.75$ &$5.83$ & &$4.40$ &$5.14$ & &$1.67$ &$1.91$\\
  &    &    & &    &$-\frac{1}{2}$ &  $\frac{1}{2}$ & &$3.83$ &$7.53$ & &$2.62$ &$6.42$ & &$0.79$ &$2.91$\\
  &    &    & &    &$\frac{1}{2}$ & $-\frac{1}{2}$ & &$3.83$ &$7.69$ & &$3.18$ &$5.18$ & &$1.26$ &$2.19$\\
  &    &    & &    &$\frac{1}{2}$ &  $\frac{1}{2}$ & &$2.43$ &$6.10$ & &$1.83$ &$5.40$ & &$0.53$ &$2.54$\\
\midrule
$7$ & $3$ & $1$ & $1$& $-$ & $0$ & $0$ & &$9.13$ &$16.24$ & &$2.51$ &$5.11$ & &$1.02$ &$2.41$\\
  &    &    & &    &$-\frac{1}{2}$ & $-\frac{1}{2}$ & &$13.97$ &$16.72$ & &$4.07$ &$4.95$ & &$1.81$ &$2.07$\\
  &    &    & &    &$-\frac{1}{2}$ &  $\frac{1}{2}$ & &$9.41$ &$19.51$ & &$2.18$ &$6.38$ & &$0.75$ &$3.11$\\
  &    &    & &    &$\frac{1}{2}$ & $-\frac{1}{2}$ & &$9.17$ &$18.79$ & &$2.98$ &$4.91$ & &$1.45$ &$1.99$\\
  &    &    & &    &$\frac{1}{2}$ &  $\frac{1}{2}$ & &$6.00$ &$15.82$ & &$1.56$ &$5.25$ & &$0.59$ &$2.69$\\
\end{tabular}}
\caption{Least squares error and maximum error in multi-degree reduction of degree eleven B\'{e}zier ``alpha'' curve.}
\label{Tab:1}
\end{table}


\end{example}


\begin{example}\label{Ex:heart}
Let us apply the algorithms to degree thirteen B\'{e}zier ``heart'' curve  (for the control points, see~\cite[Appendix~B]{RM13}) and consider the case of
$k = l = 2$. The results of experiments are given in Table~\ref{Tab:2}.
Notice that the case of $\alpha = \beta = 0$ was also considered in~\cite[\S5.2]{RM13} and \cite[Example 4]{ZWY13}.
As in~\cite{ZWY13}, we can clearly see that the solution to the $G^{2,2}$ case, in this paper obtained by Algorithm~\ref{A:AlgorithmG}, is more accurate than the result given by the approach proposed in~\cite{RM13}, which leads to the $C^{1,1}/G^{2,2}$ case (the same as for the algorithm discussed in~\S\ref{SubSec:CGAlg}).
As our approach considers different weight functions, it can be seen that the best choice to minimize $E_{\infty}$ error is $\alpha = \beta = -\frac{1}{2}$. Fig.~\ref{Fig:2} presents $\alpha = \beta = 0$ case.

Now, we focus on the running times. We have implemented $G^{1,1}$, $C^{1,1}/G^{2,2}$ and $C^{1,1}/G^{3,3}$-constrained methods from \cite{RM13}, $G^{1,1}$ and $C^{1,1}/G^{2,2}$-constrained methods from \cite{Lu13} as well as $G^{2,1}$ and $G^{2,2}$-constrained methods from \cite{ZWY13}. The methods of Rababah and Mann and of Lu solve the same problem and give the same results as our $C^{p,q}/G^{k,l}$-constrained method (see~\S\ref{SubSec:CGAlg}). In Table~\ref{Tab:3}, we compare the running times of these algorithms. Clearly, our approach is the fastest one. For the comparison of the $G^{k,l}$-constrained algorithms, see Table~\ref{Tab:4}. Notice that, in some cases, our $G^{k,l}$-constrained approach is slightly faster than the methods from~\cite{ZWY13}. We use Maple{\small \texttrademark} \texttt{fsolve} procedure to solve the cubic equation~\cite[(23)]{ZWY13} associated with the $G^{2,1}$-constrained case. The implementation of $G^{2,2}$-constrained method from~\cite{ZWY13} requires the \textit{unconstrained nonlinear programming solver}. According to our experiments, the \textit{nonlinear simplex method} (\texttt{NLPSolve} command with option \texttt{method = nonlinearsimplex} and the initial point $\lambda = \eta = 1$) is the fastest solver available in Maple{\small \texttrademark}13. Therefore, we use this solver for the purpose of the comparison. It is worth mentioning that Zhou et al. have omitted the constraints~\eqref{Eq:PositiveCons}. Consequently, in some rare cases, the resulting curve may not preserve the original tangent directions at the endpoints. To avoid this issue, one can implement the improvements proposed by Lu~\cite{Lu15}.

\begin{figure}[h]
\begin{center}
\captionsetup{margin=0pt, font={scriptsize}}
\includegraphics[width=145mm]{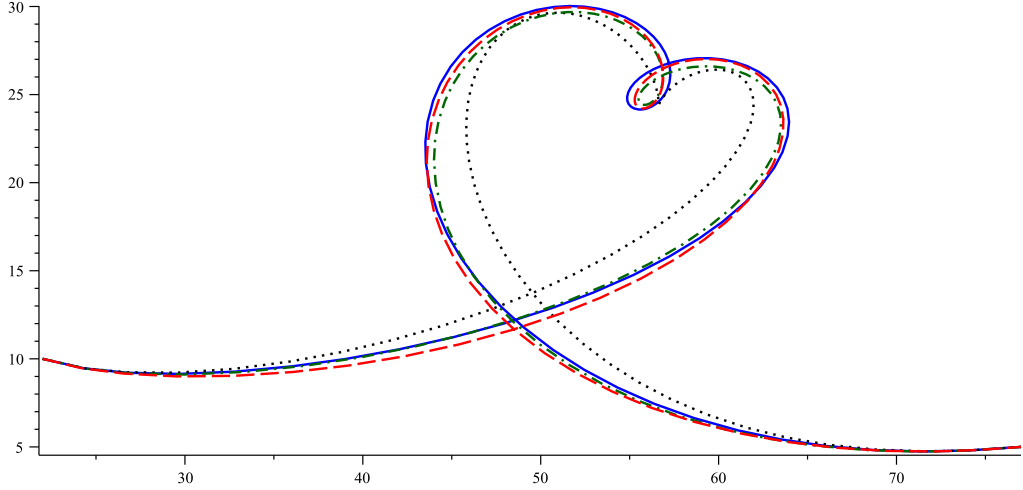}
\caption{Multi-degree reduction of degree thirteen curve (blue solid line) to degree eight
     with $C^{2,2}$ (black dotted line), $C^{1,1}/G^{2,2}$ (green dash-dotted line)
    and $G^{2,2}$ (red dashed line) continuity constraints; parameters: $\alpha = \beta = 0$.}
\label{Fig:2}
\end{center}
\end{figure}

\begin{table}[h]
\captionsetup{margin=0pt, font={scriptsize}}
\centering
\ra{1.3}
\scalebox{0.98}{
\begin{tabular}{@{}rrrlcclcclcc@{}}
\toprule \multicolumn{3}{c}{Parameters} & \phantom{abc} & \multicolumn{2}{c}{$C^{2,2}$ solution}
& \phantom{abc} & \multicolumn{2}{c}{$C^{1,1}/G^{2,2}$ solution}
& \phantom{abc} & \multicolumn{2}{c}{$G^{2,2}$ solution}
\\ \cmidrule{1-3} \cmidrule{5-6} \cmidrule{8-9} \cmidrule{11-12} $m$ & $\alpha$ & $\beta$ & \phantom{abc} &
$E_{2}^{(\alpha,\beta)}$ & $E_{\infty}$  & \phantom{abc} & $E_{2}^{(\alpha,\beta)}$ & $E_{\infty}$  & \phantom{abc} &
$E_{2}^{(\alpha,\beta)}$ & $E_{\infty}$
\\ \midrule
$8$ & $0$ & $0$ & &$1.52$ &$2.52$ & &$0.64$ &$1.12$ & &$0.36$ &$0.71$\\
  &   $-\frac{1}{2}$ & $-\frac{1}{2}$ & &$2.37$ &$2.39$ & &$1.05$ &$1.00$ & &$0.62$ &$0.53$\\
  &   $-\frac{1}{2}$ &  $\frac{1}{2}$ & &$1.58$ &$3.34$ & &$0.64$ &$1.55$ & &$0.42$ &$0.94$\\
  &   $\frac{1}{2}$ & $-\frac{1}{2}$ & &$1.52$ &$3.48$ & &$0.74$ &$1.31$ & &$0.37$ &$0.89$\\
  &   $\frac{1}{2}$ &  $\frac{1}{2}$ & &$0.98$ &$2.64$ & &$0.39$ &$1.24$ & &$0.21$ &$0.90$\\
\end{tabular}}
\caption{Least squares error and maximum error in multi-degree reduction of degree thirteen B\'{e}zier ``heart'' curve.}
\label{Tab:2}
\end{table}

\begin{table}[h]
\captionsetup{margin=0pt, font={scriptsize}}
\centering
\ra{1.3}
\scalebox{0.98}{
\begin{tabular}{@{}rrrrrlccc@{}}
\toprule \multicolumn{5}{c}{Parameters} & \phantom{abc} & \multicolumn{2}{c}{Running times [ms]}
\\ \cmidrule{1-5} \cmidrule{7-9}  $m$ & $k$ & $l$ & $p$ & $q$ & \phantom{abc} &
Our $C^{p,q}/G^{k,l}$ method & Rababah and Mann \cite{RM13} & Lu \cite{Lu13}
\\ \midrule
$8$ & $1$ & $1$ & $-$ & $-$ & & $32$ &$62$ & $63$\\
$10$ & $1$ & $1$ & $-$ & $-$ & & $31$ &$94$ & $63$\\
$12$ & $1$ & $1$ & $-$ & $-$ & & $47$ &$94$ & $62$\\
\midrule
$8$ & $2$ & $2$ & $1$ & $1$ & & $16$ &$63$ & $47$\\
$10$ & $2$ & $2$ & $1$ & $1$ & & $31$ &$63$ & $62$\\
$12$ & $2$ & $2$ & $1$ & $1$ & & $31$ &$78$ & $78$\\
\midrule
$8$ & $3$ & $3$ & $1$ & $1$ & & $15$ &$78$ & ---\\
$10$ & $3$ & $3$ & $1$ & $1$ & & $31$ &$109$ & ---\\
$12$ & $3$ & $3$ & $1$ & $1$ & & $63$ &$110$ & ---\\
\end{tabular}}
\caption{Running times of $C^{p,q}/G^{k,l}$-constrained multi-degree reduction of degree thirteen B\'{e}zier ``heart'' curve; parameters: $\alpha = \beta = 0$.}
\label{Tab:3}
\end{table}

\newpage
\begin{table}[h]
\captionsetup{margin=0pt, font={scriptsize}}
\centering
\ra{1.3}
\scalebox{0.98}{
\begin{tabular}{@{}cccccc@{}}
\toprule \multicolumn{3}{c}{Parameters} & \phantom{abc} & \multicolumn{2}{c}{Running times [ms]}
\\ \cmidrule{1-3} \cmidrule{5-6}  $m$ & $k$ & $l$ & \phantom{abc} &
Our $G^{k,l}$ method & Zhou et al.~\cite{ZWY13}
\\ \midrule
$8$ & $2$ & $1$ && $92$ & $108$ \\
$10$ & $2$ & $1$ && $121$ & $137$ \\
$12$ & $2$ & $1$ && $168$ & $166$ \\
\midrule
$8$ & $2$ & $2$ && $153$ & $204$\\
$10$ & $2$ & $2$ && $298$& $248$\\
$12$ & $2$ & $2$ && $290$& $292$\\
\end{tabular}}
\caption{Running times of $G^{k,l}$-constrained multi-degree reduction of degree thirteen B\'{e}zier ``heart'' curve; parameters: $\alpha = \beta = 0$.}
\label{Tab:4}
\end{table}

\end{example}

\section{Conclusions}\label{Sec:Conc}

In this paper, we propose efficient methods of solving the problems of $G^{k,l}$-constrained and $C^{p,q}/G^{k,l}$-constrained multi-degree reduction of B\'{e}zier curves with respect to the least squares norm. We give two-phase algorithms of solving these problems.\\
\indent The first phase of the algorithms consists in finding values of the geometric continuity parameters to minimize
the error~\eqref{Eq:Distance0}. In the case of $G^{k,l}$-constrained problem, we solve the
quadratic or nonlinear programming problem to obtain these values.
For $C^{p,q}/G^{k,l}$-constrained case, we use some simplifying assumptions, i.e., we impose constraints of
$C^{1}$-continuity at $t=0$ when $k > 1$, and at $t=1$ when $l > 1$. Therefore, by fixing some of the parameters, this approach leads to
the system of linear equations~\eqref{Eq:GeoSystem-mod}.
Assuming that $-1 \leq k,l \leq 3$ and including the hybrid cases, there are $37$ continuity cases which require computation of the continuity parameters. Those variants of the problem differ, and we have not proven that in each case a unique solution exists.

During the second phase, which is the same for both approaches, we use the properties of constrained dual Bernstein
basis polynomials to compute control points of the multi-degree reduced curve. The complexity of this phase
is $O(mn)$, where $n$ and $m$ are the degrees of the input and output curves, respectively.
This is significantly less than complexity of other algorithms. Moreover, our approach avoids matrix inversion.

As expected, solution to $G^{k,l}$-constrained problem is the most accurate,
while the one obtained by $C^{p,q}/G^{k,l}$-constrained multi-degree reduction is less precise.
$C^{k,l}$ conditions tend to be too restrictive, especially for $k$ or $l$ exceeding $2$.
Comparison of running times of our $C^{p,q}/G^{k,l}$-constrained approach with analogous methods from~\cite{Lu13,RM13} shows advantage of our algorithm in practice. Furthermore, the experiments show that our $G^{k,l}$-constrained approach is comparable to the methods of~\cite{ZWY13}, even slightly faster in some cases.

\bibliographystyle{elsart-num-sort}
\bibliography{Gdegred}
\nocite{*}

\end{document}